\newcommand\reallywidehat[1]{%
\savestack{\tmpbox}{\stretchto{%
  \scaleto{%
    \scalerel*[\widthof{\ensuremath{#1}}]{\kern-.6pt\bigwedge\kern-.6pt}%
    {\rule[-\textheight/2]{1ex}{\textheight}}
  }{\textheight}%
}{0.5ex}}%
\stackon[1pt]{#1}{\tmpbox}%
}
\newtheorem{thm}{Theorem}[section]
\newtheorem{prop}[thm]{Proposition}
\newtheorem{lemma}[thm]{Lemma}
\newtheorem{cor}[thm]{Corollary}
\newtheorem{ex}[thm]{Example}
\newtheorem{examples}[thm]{Examples}
\newtheorem{remark}[thm]{Remark}
\newtheorem{definition}[thm]{Definition}
\newenvironment{proof}{\noindent  Proof:\ }{\hspace*{\fill} $\Box $\\}
\newcommand{\F}{\mathbb{F}}
\newcommand{\ho}{\mbox{\rm Hom}}
\newcommand{\wt}{\mbox{\rm wt}}
\newcommand{\hho}{\mbox{\rm Hom}}
\newcommand{\Sym}{\mbox{\rm S}}
\newcommand{\soc}{\mbox{\rm Soc}}
\newcommand{\ann}{\mbox{\rm Ann}}
\newcommand{\C}{\mbox{$\rm C$}}
\newcommand{\LCD}{\mbox{$\rm  LCD$}}
\newcommand{\cha}{\mbox{\rm char }}
\newcommand{\RM}{\mbox{\rm RM}}
\title{\bf On checkable codes in group algebras }
\author{\textbf{Martino Borello}, \\ Universit\'e Paris 8, Laboratoire de G\'eom\'etrie, Analyse et Applications, LAGA,\\ Universit\'e Sorbonne Paris Nord, CNRS, UMR 7539, F-93430, Villetaneuse, France\\ and \\
\textbf{Javier de la Cruz} \\Universidad del Norte, Barranquilla, Colombia\\
and \\ \textbf{Wolfgang Willems} \\ Otto-von-Guericke Universit\"at, Magdeburg, Germany  \\
 and Universidad del Norte, Barranquilla, Colombia
}
\date{}
\begin{document}

\maketitle

\noindent
{\bf Keywords.} Group algebra; group code; principal
ideal algebras. \\
{\bf MSC classification.} 94B05, 20C05

\begin{abstract}   
	We classify, in terms of the structure of the finite group \textit{G}, all
	group algebras \textit{KG} for which all right ideals are right
	annihilators of principal left ideals. This means in the language of
	coding theory that we classify code-checkable group algebras \textit{KG}
	which have been considered so far only for abelian groups \textit{G}.
	Optimality of checkable codes and asymptotic results are discussed.
  \end{abstract}

\section{Introduction}

Block codes were invented in the forties to correct errors in the
communication through noisy channels (see \cite{book} for more
details). In their most general sense, they are just subsets of
(code)words of a fixed length $n$ over an alphabet $K$, such that
the Hamming distance between the words (i.e. the number of distinct
letters) is large enough. One of the main practical problems of
coding theory is how to store a given code, which can be, without an
additional structure, quite expensive (we may have to store the
whole list of codewords). This is one of the main reasons why, since
the beginning, linear codes were introduced: a linear code of length
$n$ over a finite field $K$ is a subspace of the vector space $K^n$.
Such algebraic structure allows to describe a linear code in a more
compact way: a linear code $C$ of length $n$ and dimension $k$ can
be defined by its parity check matrix, which is a $n\times (n-k)$
matrix $H$ such that $c\in C$ if and only if $cH=0$, i.e., it is a
matrix which gives $n-k$ check equations which determine the code.
Such a description reduces exponentially the size of the data to be
stored and it has made linear codes so much used. However, in the
context of McEliece cryptosystem \cite{McEliece} and its dual
version by Niederreiter \cite{Niederreiter}, in which the public key is
given by the parity check matrix, a code of large length and
dimension, the size of the matrix constitutes one of the main
disadvantages. McEliece cryptosystem and its variants, which are
part of the so-called code-based cryptography, are now subjects of
intense research, due to their probable resistance to quantum
computer's attacks. One of the central problems is to reduce the size
of the public key (see for example \cite{BCGO}). In order to do it,
one usually adds more algebraic structure. A classical family of
more structured linear codes is that of cyclic codes, which are
linear codes invariant under a cyclic shift. It is well-known that
they can be seen as (principal) ideals inside the polynomial ring $A
= K[x]/(x^n-1)$. If a linear code $C$ is cyclic, we have $C=gA$ with
$g \in A$ and $C$ is determined by only one check equation, given by
the so-called check polynomial $f= (x^n-1)/g$. In this note we focus
on more general examples in $K$-algebras in which codes are
determined by just one check equation.

A natural generalization of cyclic codes is given by the family of
group codes: a linear code $C$ is called a  {\it $G$-code} (or
a {\it group code}) if $C$ is a right ideal in the group algebra $KG
= \{ a=\sum_{g\in G}a_gg \mid a_g \in G\}$ for  $G$  a finite
group. Here the vector space $KG$ with basis $\{g \in G \}$ serves
as the ambient space with the weight function $\wt(a) = |\{g \in G
\mid a_g \not= 0 \}|$ and the non-degenerate symmetric bilinear form
$\langle \cdot \, , \cdot \rangle $ which is defined by
$$ \langle g,h \rangle = \delta_{g,h} \quad \text{for $g,h \in G$}.$$
Note that $KG$ carries  a $K$-algebra structure
via the multiplication in $G$. More precisely, if
$a=\sum_{g\in G}a_gg$ and $b=\sum_{g  \in G}b_g g$ are given, then
$$ ab = \sum_{g \in G} (\sum_{h \in G} a_hb_{h^{-1}g}) g.$$
In this sense
cyclic codes are  group codes for a cyclic group $G$. Reed Muller
codes over prime fields $\F_p$  are group codes for an elementary
abelian $p$-group $G$ \cite{Berman,Charpin}, and there are many
other remarkable optimal codes which have been detected
as group codes \cite{Bernhardt,Conway,Felde,McLH}.\\

We would like to mention here that choosing right ideals as group
codes is just done by convention.
Everything what we will prove holds equally true for group codes which are left ideals. \\

If $G$ is cyclic, then all right ideals of $KG$ afford only one
check equation as we have seen above. In case $G$ is a
general finite group there are only particular right ideals which
satisfy this property. Such codes are called \emph{checkable} and
these are the subject of this paper. To our knowledge, such codes were
first defined in \cite{JLLX10} and investigated for abelian group
algebras (most of the results of \cite{JLLX10} are now published in
\cite{JLLX}). \\

{In \S\ref{sec:ideals}, we characterize checkable
	codes in terms of their duals, proving that a right ideal $C \leq KG$
	is checkable if and only if its dual is a principal right ideal
	(Theorem \ref{P1}). This result provides an easy way to construct
	random checkable codes and we use it to find optimal codes (see
	Remark \ref{rmk:optimal}). Moreover, two of the consequences are the
	following: maximal ideals in group algebras are checkable and
	two-sided ideals $C$ such that $KG/C$ is a Frobenius algebra are
	checkable. In particular, the Jacobson radical of every group algebra
	is checkable.}

{In \S\ref{sec:algebras} we classify, in terms of
	the structure of the finite group $G$, all group algebras $KG$ for
	which all right ideals are checkable, that is \emph{code-checkable
		group algebras}. This is done in terms of the $p$-blocks of $KG$,
	with $p$  the characteristic of the field $K$ (see Theorem
	\ref{T2}). As a consequence we get the following (see Corollary
	\ref{C3}): $KG$ is a code-checkable group algebra if and only if $G$
	is $p$-nilpotent with a cyclic Sylow $p$-subgroup, which happens if
	and only if all right ideals in the principal $p$-block of $KG$ are
	checkable.}

{In \S\ref{sec:asym} we shortly present the
	asymptotic performance of  checkable codes. Together with the explicit
	construction of optimal codes in Remark \ref{rmk:optimal}, they seem
	to suggest that the family of checkable codes is worth further
	investigation. In particular, it is desirable to prove some
	bounds on the minimum distance for checkable codes and to introduce
	families of checkable codes with prescribed minimum distance (in
	analogy to BCH codes). Some results in this direction for dihedral codes have been given recently in \cite{BJ}. Moreover, it would be extremely interesting
	to develop some fast decoding algorithms. These would be the minimum
	requirements for an effective use of these codes in cryptography
	and this will be the subject of further investigation.}

\section{Checkable ideals}\label{sec:ideals}

Let $A$ be a finite dimensional algebra over a field $K$.
For any subset  $C \subseteq A$, the {\it right annihilator} $\ann_r(C)$ is defined by
$$ \ann_r(C) = \{a \mid a \in A,\ ca =0 \ \text{for all} \ c \in C \}.$$
Analogously, the {\it left annihilator} of $C$ is given by
$$ \ann_l(C) = \{a \mid a \in A,\ ac =0 \ \text{for all} \ c \in C\}.$$
Note that the right (left) annihilators are right (left) ideals in $A$. \\

\begin{definition} {\rm  A right ideal $ I\leq A$ is called {\it checkable} if there exists an element $v \in A$ such that
		$$ I=  \{a \mid a \in A,\, va=0\} = \ann_r(v) = \ann_r(Av).$$
		Note that checkable left ideals are defined analogously via the left
		annihilator of a principal right ideal. A group algebra $KG$ is
		called {\it code-checkable} if all right ideals of $KG$ are
		checkable. }
\end{definition}

Recall that a finite dimensional $K$-algebra $A$ is called a {\it Frobenius algebra} if there exists
a $K$-linear function $\lambda \in  \ho_K(A,K)$ whose kernel contains no left or right ideal other than zero.
In case $\lambda(ab) = \lambda(ba)$ for all $a,b \in A$, we say that $A$ is a {\it symmetric algebra}.
Note that group algebras are Frobenius algebras;  even more, they are symmetric
algebras. In such algebras the annihilators of ideals satisfy the
double annihilator property  (see \cite[Chap. VII]{HB}).

\begin{prop}[Double Annihilator Property] \label{P4}    Let $A$ be a Frobenius algebra.
	If $I \leq A$ is a right ideal in $A$, then
	$$ I = \ann_r(\ann_l(I)).$$
	A similar equation holds for left ideals.
\end{prop}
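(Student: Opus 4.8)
The plan is to recognize the two annihilator operations as the left and right orthogonal complements with respect to the nondegenerate associative bilinear form attached to the Frobenius structure, after which the statement reduces to pure linear algebra.

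First I would fix $\lambda \in \ho_K(A,K)$ with $\kernel \lambda$ containing no nonzero left or right ideal, and put $B(a,b) = \lambda(ab)$ for $a,b \in A$. Then $B$ is $K$-bilinear and associative, $B(ab,c) = \lambda(abc) = B(a,bc)$, and it is nondegenerate: if $B(a,A) = 0$, then the right ideal $aA$ lies in $\kernel \lambda$, so $aA = 0$ and hence $a = a\cdot 1 = 0$; vanishing of the left radical follows symmetrically. For a $K$-subspace $V \leq A$ write ${}^{\perp}V = \{a \in A \mid B(a,v) = 0 \text{ for all } v \in V\}$ and $V^{\perp} = \{a \in A \mid B(v,a) = 0 \text{ for all } v \in V\}$. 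Nondegeneracy gives $\dim_K {}^{\perp}V = \dim_K V^{\perp} = \dim_K A - \dim_K V$, and since obviously $V \subseteq ({}^{\perp}V)^{\perp}$, a dimension count yields $({}^{\perp}V)^{\perp} = V$.

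The heart of the argument is the identification $\ann_l(I) = {}^{\perp}I$ for a right ideal $I \leq A$. The inclusion $\subseteq$ is immediate, since $aI = 0$ forces $\lambda(ai) = 0$ for all $i \in I$. For the reverse inclusion, suppose $\lambda(ai) = 0$ for all $i \in I$; because $I$ is a right ideal we get $\lambda((ai)x) = \lambda(a(ix)) = 0$ for all $i \in I$ and $x \in A$, so the right ideal $(aI)A$ is contained in $\kernel \lambda$ and is therefore zero, whence $aI = 0$ and $a \in \ann_l(I)$. Interchanging left and right (and using that $\kernel \lambda$ contains no nonzero left ideal) gives in the same way $\ann_r(J) = J^{\perp}$ for every left ideal $J \leq A$.

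Combining these two identifications, $\ann_r(\ann_l(I)) = \ann_r({}^{\perp}I) = ({}^{\perp}I)^{\perp} = I$, which is the claim; the statement for left ideals is obtained by reading the whole argument with the roles of left and right exchanged. I expect the only genuinely delicate point to be the reverse inclusion in $\ann_l(I) = {}^{\perp}I$: one cannot apply the defining property of $\lambda$ to $aI$ directly, but must first enlarge it to the honest right ideal $(aI)A$; everything else is formal manipulation of the bilinear form together with a dimension count.
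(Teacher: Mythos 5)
Your proof is correct. The paper does not prove this proposition at all --- it simply cites Huppert--Blackburn, Chap.~VII --- and your argument is precisely the standard one found there: identify $\ann_l(I)$ and $\ann_r(J)$ with the left and right orthogonal complements for the associative form $\lambda(ab)$, then conclude by nondegeneracy and a dimension count. One tiny simplification: since $I$ is a right ideal, $aI$ is already closed under right multiplication by $A$ and hence is itself a right ideal, so the enlargement to $(aI)A$ is not needed (though it is harmless).
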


\begin{cor} \label{cor1} In a Frobenius algebra $A$ a right $($left$)$ ideal $I$ is checkable if and only if $\ann_l(I)$ $(\ann_r(I))$ is a
	principal left $($right$)$ ideal.
\end{cor}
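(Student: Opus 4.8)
The plan is to reduce the statement to the Double Annihilator Property (Proposition~\ref{P4}) together with the elementary identity $\ann_r(v)=\ann_r(Av)$, which holds in any unital algebra. I will argue the right-ideal version; the left-ideal version follows by interchanging the roles of $\ann_l$ and $\ann_r$ throughout.

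First I would record that $\ann_r(Av)=\ann_r(v)$ for every $v\in A$. The inclusion $\ann_r(Av)\subseteq\ann_r(v)$ is immediate since $v=1\cdot v\in Av$; conversely, if $va=0$ then $(bv)a=b(va)=0$ for all $b\in A$, so $a\in\ann_r(Av)$. Consequently the conditions ``$I=\ann_r(v)$'' and ``$I=\ann_r(Av)$'' are equivalent, and a right ideal $I$ is checkable precisely when $I=\ann_r(Av)$ for some $v\in A$, i.e.\ precisely when $I$ is the right annihilator of a principal left ideal.

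For the forward implication, assume $I$ is checkable, say $I=\ann_r(Av)$. Applying $\ann_l$ and using the left-ideal form of Proposition~\ref{P4} for the left ideal $Av$, I obtain $\ann_l(I)=\ann_l(\ann_r(Av))=Av$, so $\ann_l(I)$ is a principal left ideal. For the converse, assume $\ann_l(I)=Av$ for some $v\in A$. Since $I$ is a right ideal, the right-ideal form of Proposition~\ref{P4} gives $I=\ann_r(\ann_l(I))=\ann_r(Av)$, and by the first paragraph this exhibits $I$ as checkable (with check element $v$).

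The argument is essentially formal once the Double Annihilator Property is available. The only points needing a little care are that the correct (left versus right) version of Proposition~\ref{P4} must be invoked on the appropriate side, and that unitality of $A$ is used — in the form $v\in Av$ — so that checkability may be expressed interchangeably via the element $v$ or via the principal ideal $Av$. I do not expect any genuine obstacle.
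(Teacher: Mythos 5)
Your proof is correct and is precisely the argument the paper intends: the corollary is stated without proof as an immediate consequence of Proposition~\ref{P4}, and your two applications of the double annihilator property (the left-ideal form to $Av$ for the forward direction, the right-ideal form to $I$ for the converse), together with the observation $\ann_r(v)=\ann_r(Av)$, fill in exactly the expected details. The care you take to invoke the correct sided version of Proposition~\ref{P4} in each direction is appropriate and the argument has no gaps.
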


\begin{examples}{\rm  a) Let $e=e^2$ be an idempotent in $A$. Then the ideal $eA$ is checkable. This can be seen as follows. Obviously,
		$eA \leq \ann_r(A(1-e))$. Since any $0 \not= (1-e)b \in (1-e)A$ is not in $\ann_r(A(1-e))$ we have
		$eA = \ann_r(A(1-e))$.\\
		b) If $A$ is a semisimple algebra, then all right and left ideals are generated by idempotents. Thus all right and left ideals are checkable.\\
		c) All cyclic codes are checkable, since the check equation is given by the check polynomial.\\
		d) $\LCD$ group codes $C$ (that is, codes for which $C\cap C^\perp
		=\{0\}$) are checkable since $C=eKG$ with a self-adjoint idempotent
		$e$, by \cite{CW}.}
\end{examples}

As mentioned in the introduction, the group algebra $KG$ carries a non-degenerate symmetric bilinear form $\langle \cdot \, , \cdot \rangle $.
Thus, for any subset $C \subseteq KG$ the orthogonal space $C^\perp \leq KG$ is well defined. Observe that $C^\perp$ is always a $K$-linear
vector space.

In order to state an early result of Jessie  MacWilliams  recall
that the $K$-linear map  \ $\hat{}: KG \longrightarrow KG$ defined
by $g \mapsto \hat{g}=g^{-1}$ ($g \in G$) is an antialgebra
automorphism of $KG$.

\begin{lemma}[\cite{MacW}] \label{dual}  If $C$ is a right ideal in $KG$, then $\C^\perp = \widehat{\ann_l(C)}$.
	Similarly, for a left ideal $C$ we have $\C^\perp =
	\widehat{\ann_r(C)}.$
\end{lemma}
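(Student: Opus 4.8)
The plan is to base everything on a single adjunction identity linking the algebra product, the form $\langle\cdot,\cdot\rangle$ and the antiautomorphism $a\mapsto\hat a$. First I would verify, by evaluating on basis elements $g,h,k\in G$, that
$$\langle ab,c\rangle=\langle b,\hat a\,c\rangle=\langle a,c\,\hat b\rangle\qquad\text{for all }a,b,c\in KG;$$
indeed $\langle gh,k\rangle=\delta_{gh,k}=\delta_{h,g^{-1}k}=\langle h,\hat g\,k\rangle$, the middle equality holding because $gh=k\Leftrightarrow h=g^{-1}k$, and the third expression is handled symmetrically; both identities then extend by bilinearity. I would also note that $a\mapsto\hat a$ is an isometry of $\langle\cdot,\cdot\rangle$, since $\langle\hat g,\hat h\rangle=\delta_{g^{-1},h^{-1}}=\delta_{g,h}$.

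Next, for a right ideal $C$ I would prove $C^\perp=\widehat{\ann_l(C)}$ by two inclusions. For ``$\supseteq$'': if $x\in\ann_l(C)$ and $c\in C$, then applying the identity with $(a,b,c)$ replaced by $(x,c,1)$ gives $\langle c,\hat x\rangle=\langle xc,1\rangle=0$, so $\hat x\in C^\perp$. For ``$\subseteq$'': let $y\in C^\perp$; for every $c\in C$ and every $a\in KG$ we have $ca\in C$ since $C$ is a right ideal, hence $0=\langle ca,y\rangle=\langle a,\hat c\,y\rangle$, and as $a$ runs over all of $KG$ the non-degeneracy of the form forces $\hat c\,y=0$. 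Applying the antiautomorphism yields $\hat y\,c=\widehat{\hat c\,y}=0$ for all $c\in C$, i.e.\ $\hat y\in\ann_l(C)$, so $y\in\widehat{\ann_l(C)}$.

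Finally I would deduce the left-ideal statement from the right-ideal one by transport along $a\mapsto\hat a$, avoiding a second computation. If $C$ is a left ideal then $\hat C$ is a right ideal; using that $a\mapsto\hat a$ is an isometry one gets $\widehat{C^\perp}=(\hat C)^\perp$, and from the definitions together with $\widehat{a\hat c}=c\,\hat a$ one gets $a\in\ann_l(\hat C)\Leftrightarrow c\,\hat a=0\ \forall c\in C\Leftrightarrow\hat a\in\ann_r(C)$, that is, $\widehat{\ann_l(\hat C)}=\ann_r(C)$. Applying the already-proved case to $\hat C$ gives $(\hat C)^\perp=\widehat{\ann_l(\hat C)}$, and combining the displayed identities yields $C^\perp=\widehat{\ann_r(C)}$.

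I do not expect any genuinely hard step: the content is a routine manipulation of the bilinear form and the order-reversing map $a\mapsto\hat a$. The only place requiring care is the left/right bookkeeping --- pairing a right ideal with the \emph{left} annihilator, inserting the hat on the correct side in the adjunction identity, and invoking non-degeneracy (equivalently, the Frobenius property of $KG$) precisely at the step where ``$\langle a,z\rangle=0$ for all $a$'' is upgraded to ``$z=0$''.
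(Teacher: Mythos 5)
Your argument is correct and is essentially the paper's proof in expanded form: the adjunction identity $\langle ab,c\rangle=\langle b,\hat a\,c\rangle$ is exactly the coefficient computation the paper performs (the coefficient of $\hat a c h$ at $h$ being $\langle a,c\rangle$), and both arguments combine this with the right-ideal property and the non-degeneracy of the form in the same way. Your derivation of the left-ideal case by transport along $a\mapsto\hat a$ is just a tidy substitute for the paper's ``similarly''.
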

\begin{proof} We have $ a = \sum_{g \in G}a_gg \in \widehat{\ann_l(C)}$ if and only if $\hat{a}ch =0$, for all $c \in C$ and all $h \in G$.
	Since the coefficient at $h$ in $\hat{a}ch$ equals
	$$ \sum_{g \in G} a_gc_g = \langle a,c \rangle, $$
	the assertion follows.
\end{proof}

\begin{thm}\label{P1} For any right ideal $C \leq KG$ the following are equivalent.
	\begin{itemize}
		\item[\rm a)] $C$ is checkable.
		\item[\rm b)] $\C^\perp$ is a principal right ideal.
	\end{itemize}
\end{thm}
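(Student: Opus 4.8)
The plan is to combine Corollary~\ref{cor1} with Lemma~\ref{dual}, using that the antialgebra automorphism $\widehat{\,\cdot\,}$ of $KG$ is an involution which carries principal left ideals to principal right ideals and back. So the first step I would record is the elementary observation that, because $\widehat{ab}=\hat b\,\hat a$ and $\hat{\hat a}=a$ for all $a,b\in KG$, the hat map sends a principal left ideal $KGv$ onto the principal right ideal $\hat v\,KG$ and a principal right ideal $wKG$ onto the principal left ideal $KG\,\hat w$. In particular, a subset of $KG$ is a principal left ideal if and only if its image under $\widehat{\,\cdot\,}$ is a principal right ideal.

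For the implication a)$\,\Rightarrow\,$b), suppose $C$ is checkable. Since $KG$ is a Frobenius algebra, Corollary~\ref{cor1} applies and tells us that $\ann_l(C)$ is a principal left ideal. By Lemma~\ref{dual} we have $C^\perp=\widehat{\ann_l(C)}$, and by the preliminary observation the image of a principal left ideal under the hat map is a principal right ideal; hence $C^\perp$ is a principal right ideal.

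For the converse b)$\,\Rightarrow\,$a), assume $C^\perp$ is a principal right ideal. Applying the hat map to the identity $C^\perp=\widehat{\ann_l(C)}$ of Lemma~\ref{dual} and using that $\widehat{\,\cdot\,}$ is an involution gives $\ann_l(C)=\widehat{C^\perp}$, which by the preliminary observation is a principal left ideal. Corollary~\ref{cor1} then yields that $C$ is checkable.

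There is no real obstacle here: the statement is essentially a translation of Corollary~\ref{cor1} through MacWilliams's Lemma~\ref{dual}, and the only point requiring a little care is the left/right bookkeeping forced by the fact that $\widehat{\,\cdot\,}$ reverses products.
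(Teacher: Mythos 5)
Your argument is correct and follows exactly the paper's own route: Corollary~\ref{cor1} combined with Lemma~\ref{dual} and the fact that the hat map exchanges principal left and principal right ideals. The paper compresses both implications into one ``if and only if'' chain, whereas you spell out the two directions separately, but the content is identical.
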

\begin{proof}
	According to Corollary \ref{cor1}, $C$ is checkable if and only if $\ann_l(C) = KGv$, for some $v \in KG$.
	Now Lemma \ref{dual} implies $C^\perp = \reallywidehat{\ann_l(C)} = \reallywidehat{KGv} = \hat{v}KG$ and the proof is complete.
\end{proof}

\begin{examples} {\rm a) The binary extended Golay ${\cal G}_{24}$ is a  group code in $\F_2\Sym_4$, with $\Sym_4$  the symmetric 
		group on $4$ letters
		(see \cite{Bernhardt}) and a group code in $\F_2{\rm D}_{24}$, with ${\rm D}_{24}$  a dihedral group of order $24$ (see \cite{McLH}).
		Note that the binary extended Golay code is checkable in both algebras, by Theorem \ref{P1}, since ${\cal G}_{24} ={\cal G}_{24}^\perp $
		is constructed as a principal ideal in both cases.
		We would like to mention here that
		the extended ternary Golay code is not a group code according to (\cite[Theorem 1.1]{W}).\\
		b) In \cite{Berman} and \cite{Charpin} it is shown that the
		Reed-Muller code $\RM_p(r,m)$ of order $r$ and length $p^m$ over the
		prime field $\F_p$ can be constructed as $\RM_p(r,m) = J^{N-r}$,
		with $J$  the Jacobson radical of a group algebra $\F_pG$, for
		$G$  an elementary abelian $p$-group of
		rank $m$ and $N=m(p-1)$. For more details in what follows the reader may inquire \cite{W99}. Note that 
		$(J^{N-r})^\perp =\RM_p(r,m)^\perp = \RM_p(N-r-1,m) =J^{r+1}$. Thus
		in order to check which Reed-Muller codes are checkable, we have to
		check which powers $J^i$ are principal ideals. Clearly, if $m=1$,
		then $G$ is cyclic and therefore all ideals in $\F_pG$ are
		principal. In case $m>1$, apart from the full space $\F_pG$, only
		$\RM_p(N-1,m) = J$ is checkable. This can be seen as follows.
		Clearly, $J^\perp = J^N = \RM_p(0,m) = \F_p\sum_{g \in G} g$ is
		principal. Suppose that $J^r$ is principal, for some $0<r<N$. Thus
		$J^r/J^{r+1}$ is principal as well, hence $J^r/J^{r+1} =  (a
		+J^{r+1})\F_p G$. Moreover, $J^r/J^{r+1}$ is a direct sum of trivial
		$\F_pG$-modules. Thus $(a+ J^{r+1})g = a + J^{r+1}$, for all $g \in
		G$, which implies $\dim J^r/J^{r+1} =1$. On the other hand,
		according to (\cite[Proposition 7.2.3]{W99}) we have $\dim
		J^r/J^{r+1} >1$. Thus only $J^N$ is principal, which means, by
		Theorem \ref{P1}, that $J=\RM_p(N-1,m)$ is the only checkable
		Reed-Muller code apart from $\F_pG$. }
\end{examples}

\begin{remark}{\rm In  \cite{JLLX10} the authors point out that in numerous cases the parameters of checkable group
		codes for an abelian group $G$ are as good as the best known linear
		codes mentioned in \cite{G}. Even more, there is a checkable
		$[36,28,6]$ group code in $\F_5(C_6 \times C_6)$ and a checkable
		$[72,62,6]$ group code in $\F_5(C_6\times C_{12})$. In both cases
		the minimum distance is improved by $1$ from an earlier lower bound
		in \cite{G}.}
\end{remark}

\begin{remark}\label{rmk:optimal}{\rm Theorem \ref{P1} provides an easy way to construct random
		checkable codes: it is enough to choose a random element in $KG$ and
		then take the dual of the principal ideal generated by this element.
		Such a construction allows to do extensive searches for codes with
		the best known minimum distance (let us call them optimal codes, for
		simplicity). Using {\sc Magma} \cite{Magma}, we observed that there
		exists an optimal checkable code over $\F_2$, for every group of
		order $\leq 100$, and an optimal checkable code over $\F_3$ and
		$\F_4$, for every group of order $\leq 50$. For some groups we could
		find only trivial checkable codes, that is, of dimension $1$. But in
		many cases the optimal checkable codes that we found have a higher
		dimension.\\
		Let us give two examples: a binary and a ternary optimal code.\\
		If
		$$\begin{array}{ll}G=&\langle
		a,b,c,d \mid\\ a^8=&b^2=c^2=d^8=1,ab=ba,ac=ca,bc=cb,da=a^7cd,db=bd^5,dc=cd\rangle\end{array}$$
		(this is the 22nd group of order $64$ in the library SmallGroups of
		{\sc Magma}) and
		$$u=1 + a^6c + ad^4 + a^3 + a^7bd^4 + a^7cd^4 + a^7bc + a^7bcd^4
		+ d + a^6d + acbd + a^7d^5\in \F_2 G,$$ the dual of $uKG$ is a
		$[64,32,12]$ code over $\F_2$.\\
		If $$G=\langle
		a,b,c \mid a^4=b^4=c^3=1,ab=ba,ca=a^3b^3c,cb=ac\rangle\simeq (C_4\times
		C_4)\rtimes C_3$$ and
		$$v=1+2b+a^3b^2+2a^3+2a^3b^3+2c^2b^3+c^2ab^3\in \F_3G,$$
		the dual of $vKG$ is a $[48,15,18]$ code over $\F_3$.\\
		Note that we can describe these codes very easily in terms of a
		check element, which is a generator of the dual. Here we chose a
		check element of minimum weight. }
\end{remark}

Recall that for a right $KG$-module $M$ the dual $KG$-module $M^*$ is defined by the $K$-vector space $M^* = \hho_K(M,K)$ on which $G$ acts from the right by
$$   (\varphi g)(m) = \varphi(mg^{-1}),  \quad \text{for} \ \varphi \in M^*, g \in G, m \in M.$$

\begin{remark} {\rm If $C$ is checkable, then $C^* \cong KG/uKG$, for some $u \in KG$. This immediately follows from Theorem \ref{P1}
		applying $KG/C^\perp \cong C^*$ which has been proved in
		(\cite[Proposition 2.3]{W}). }
\end{remark}

\begin{cor} Maximal ideals in $KG$ are checkable.
\end{cor}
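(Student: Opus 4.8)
The plan is to invoke Corollary~\ref{cor1}: a maximal right ideal $I\leq KG$ is checkable if and only if $\ann_l(I)$ is a principal left ideal, so the whole task reduces to showing that $\ann_l(I)$ is generated by a single element.

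Since $KG$ is a Frobenius algebra, Proposition~\ref{P4}, in both its right- and left-ideal forms, makes $I\mapsto\ann_l(I)$ and $L\mapsto\ann_r(L)$ mutually inverse inclusion-reversing bijections between the right ideals and the left ideals of $KG$; under such maps the coatoms of one lattice correspond to the atoms of the other, so the maximal right ideal $I$ should be sent to a minimal nonzero left ideal. Concretely, I would argue: first $\ann_l(I)\neq 0$, for otherwise $I=\ann_r(\ann_l(I))=\ann_r(0)=KG$, contradicting maximality; and if $0\neq L'\subsetneq\ann_l(I)$ were a left ideal, then $\ann_r(L')\supseteq\ann_r(\ann_l(I))=I$ with the inclusion strict (should $\ann_r(L')=\ann_r(\ann_l(I))$ hold, applying $\ann_l$ and the left-ideal double annihilator property gives $L'=\ann_l(I)$), so $\ann_r(L')=KG$ by maximality of $I$, whence $L'=\ann_l(\ann_r(L'))=\ann_l(KG)=0$, a contradiction. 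Hence $L:=\ann_l(I)$ is a minimal nonzero left ideal.

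A minimal nonzero left ideal is principal: for any $0\neq x\in L$ the left ideal $KGx$ is nonzero (it contains $x=1\cdot x$, as $KG$ is unital) and is contained in $L$, so $KGx=L$ by minimality. Feeding $\ann_l(I)=KGx$ back into Corollary~\ref{cor1} then shows that $I$ is checkable, completing the argument.

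I do not anticipate a genuine obstacle here: the whole argument rides on the double annihilator property already recorded in Proposition~\ref{P4}. The only points requiring a little care are the degenerate cases --- ruling out $\ann_l(I)=0$ and verifying that the inclusion $\ann_r(L')\supsetneq I$ is strict --- which is why I would establish minimality of $\ann_l(I)$ directly via annihilators rather than merely quoting the slogan that order-reversing lattice bijections carry coatoms to atoms. One could equally route the proof through Theorem~\ref{P1}, observing that $I^\perp=\widehat{\ann_l(I)}=\widehat{KGx}=\hat{x}\,KG$ is a principal right ideal; the two approaches are interchangeable.
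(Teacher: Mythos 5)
Your proof is correct, and it reaches the same structural conclusion as the paper --- namely that the ``check side'' of a maximal ideal is a minimal ideal, hence principal --- but by a genuinely different mechanism. The paper works with $C^\perp$ and Theorem~\ref{P1}: it quotes the isomorphism $KG/C^\perp \cong C^*$ and counts composition lengths ($l(C)=l(C^*)=l(KG)-1$, so $l(C^\perp)=1$) to conclude that $C^\perp$ is minimal. You instead work with $\ann_l(I)$ and Corollary~\ref{cor1}, deriving minimality purely from the double annihilator property of Proposition~\ref{P4}: the annihilator maps are mutually inverse inclusion-reversing bijections between the right- and left-ideal lattices, so they carry coatoms to atoms, and you verify the two degenerate cases ($\ann_l(I)\neq 0$, and strictness of the inclusion $\ann_r(L')\supsetneq I$) explicitly rather than by slogan. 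Your route is more elementary and slightly more general --- it uses nothing beyond the Frobenius property, so it proves the statement for maximal right ideals in any Frobenius algebra, whereas the paper's argument as written leans on the group-algebra-specific bilinear form, the external result $KG/C^\perp\cong C^*$, and the fact that dualizing preserves composition length. What the paper's version buys in exchange is consistency with the surrounding text, which systematically phrases checkability in terms of $C^\perp$ being a principal right ideal (Theorem~\ref{P1}); your closing remark that the two formulations are interchangeable via $I^\perp=\reallywidehat{\ann_l(I)}$ is exactly right.
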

\begin{proof}  For a $KG$-module $M$, let $l(M)$ denote the composition length of $M$, i.e.,
	the number of irreducible composition factors in a Jordan-H\"older series of $M$. Let $l(KG) =l$. Thus, if
	$C$ is a maximal ideal in $KG$, then $l(C) = l(C^*)=l-1$. Since
	$KG/C^\perp \cong C^*$ we get  $l(C^\perp)=1$.  Hence $C^\perp$ is a minimal ideal in $KG$.
	But minimal ideals in $KG$ are principal. Hence the assertion follows by
	Theorem \ref{P1}.
\end{proof}

Minimal ideals are in general not checkable as the following result shows.

\begin{prop} Let $G$ be a finite $p$-group and  $\cha K = p$. Then the minimal ideal  $C=K\sum_{g\in G}g$ in $KG$
	is checkable if and only if $G$ is cyclic.
\end{prop}
\begin{proof} Let $g_1,\ldots,g_s$ be a minimal set of generators of $G$. Then
	$C^\perp =
	J(KG)$, where $J(KG)$ is the Jacobson radical of $KG$. One easily
	sees that $J(KG) = \sum_{i=1}^s (g_i -1)K$ is principal if and only if $s=1$, i.e., $G$ is cyclic. 
	Now the assertion follows, by Theorem \ref{P1}.
\end{proof}

\begin{prop} Let $C$ be a two-sided ideal in $KG$ such that $KG/C$ is a Frobenius algebra. Then $C$ is checkable.
	In particular, in this case $J(KG)$ is checkable.
\end{prop}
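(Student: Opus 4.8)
The plan is to go through Corollary \ref{cor1}: since $C$ is in particular a right ideal of $KG$, it is checkable if and only if its left annihilator $\ann_l(C)$ is a principal left ideal, i.e.\ a cyclic left $KG$-module. So the whole statement reduces to the claim
$$\ann_l(C) \cong KG/C \quad \text{as left } KG\text{-modules},$$
because $KG/C$ is generated by $1+C$, hence cyclic.

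To establish the claim I would use that $KG$ is a \emph{symmetric} algebra, with symmetrizing form $\lambda(\sum_g a_g g) = a_1$. First, a short computation using that $\ker\lambda$ contains no nonzero one-sided ideal, together with $C\cdot KG = C = KG\cdot C$, shows that $\ann_l(C) = \{a \in KG : \lambda(aC)=0\}$, and that the self-duality $\mu\colon KG \xrightarrow{\ \sim\ } \ho_K(KG,K)$, $a \mapsto (x \mapsto \lambda(ax))$ — which is a $KG$-bimodule isomorphism precisely because $\lambda(ab)=\lambda(ba)$ — carries $\ann_l(C)$ onto $\{f \in \ho_K(KG,K) : f|_C = 0\}$. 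The latter space is canonically $\ho_K(KG/C,K)$. Thus $\ann_l(C) \cong \ho_K(KG/C,K)$ as left $KG$-modules, where $KG/C$ is regarded as a module over the quotient algebra $KG/C$ and hence over $KG$.

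Now the hypothesis enters. That $KG/C$ is a Frobenius algebra means, directly from the definition recalled above, that there is a linear form $\bar\lambda$ on $KG/C$ for which $a \mapsto (x \mapsto \bar\lambda(ax))$ is an isomorphism $KG/C \xrightarrow{\ \sim\ } \ho_K(KG/C,K)$ of one-sided $KG/C$-modules (injectivity being exactly the condition that $\ker\bar\lambda$ contain no nonzero one-sided ideal, and then equality of dimensions gives surjectivity). In particular $\ho_K(KG/C,K) \cong KG/C$ as left $KG/C$-modules, hence as left $KG$-modules. Combining with the previous paragraph, $\ann_l(C) \cong KG/C$ as left $KG$-modules, which is cyclic; so $\ann_l(C)$ is a principal left ideal and $C$ is checkable by Corollary \ref{cor1}. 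For the final assertion, $J(KG)$ is a two-sided ideal and $KG/J(KG)$ is semisimple, hence a symmetric, in particular Frobenius, algebra, so the first part applies.

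The step that needs care is the bookkeeping of sides: because $C$ is two-sided, $\ann_l(C)$ is simultaneously a left and a right ideal, and Corollary \ref{cor1} requires us to recognize it as a principal \emph{left} ideal; the symmetry of $KG$ (which yields both $\ann_l(C)=\ann_r(C)$ and the bimodule self-duality $KG\cong\ho_K(KG,K)$) is precisely what makes the identification above land in the left-module picture rather than the right-module one. Everything else is routine: the equivalence between the Frobenius property of $KG/C$ and the module isomorphism $KG/C\cong\ho_K(KG/C,K)$ is immediate from the definition, and the description $\ann_l(C)=\{a:\lambda(aC)=0\}$ is a direct consequence of the nondegeneracy of $\lambda$.
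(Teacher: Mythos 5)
Your argument is correct, but it takes a different route from the paper at the decisive step. Both proofs reduce, via Corollary \ref{cor1} (equivalently, the double annihilator property of Proposition \ref{P4}), to showing that $\ann_l(C)$ is a principal left ideal. The paper obtains this in one line by citing a theorem of Nakayama (in the form of Tsushima's Theorem A): for a two-sided ideal $C$ of a Frobenius algebra, $KG/C$ Frobenius implies $\ann_l(C)$ is principal. You instead prove the needed implication from scratch: the symmetrizing form $\lambda$ on $KG$ identifies $\ann_l(C)$ with $\{f\in\ho_K(KG,K): f|_C=0\}\cong \ho_K(KG/C,K)$ as left $KG$-modules, and the Frobenius hypothesis on $KG/C$ gives $\ho_K(KG/C,K)\cong KG/C$, which is cyclic; a left ideal isomorphic to a cyclic module is principal. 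This is a valid, self-contained replacement for the citation, and it makes visible exactly where the Frobenius hypothesis on the quotient is used; the cost is that you are re-proving one direction of Nakayama's theorem. One small point of precision: the map $a\mapsto\bigl(x\mapsto\bar\lambda(ax)\bigr)$ that you write down is an isomorphism onto the dual of the \emph{left} regular module viewed as a \emph{right} module; for the left-module identification you actually need its mirror $a\mapsto\bigl(x\mapsto\bar\lambda(xa)\bigr)$, whose injectivity uses that $\ker\bar\lambda$ contains no nonzero left ideal. Since the Frobenius definition you quote excludes both left and right ideals from $\ker\bar\lambda$, both maps are isomorphisms and your conclusion stands; you flag the side-bookkeeping yourself, and this is the only place it genuinely matters. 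Your treatment of the final assertion (semisimplicity of $KG/J(KG)$ implies Frobenius) agrees with the paper's.
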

\begin{proof} Since
	$KG/C$ is a Frobenius algebra, a result of Nakayama (\cite[Theorem
	A]{T}) directly implies  $\ann_l(C) =KG a$, for some $a \in A$.
	Applying the double annihilator property, we get  $C = \ann_r(KGa)$.
	Finally note that according to Wedderburn's Theorem $KG/J(KG)$ is a
	direct sum of full matrix algebras over extension fields of $K$,
	hence  a Frobenius algebra.
\end{proof}

\section{Code-checkable group algebras}\label{sec:algebras}

Let $KG = B_0 \oplus \ldots \oplus B_s$ be a decomposition of $KG$
into $p$-blocks $B_i$, with $\cha K=p$ and $B_i=f_iKG$ with block
idempotents $f_i$. Recall that the $B_i$ are $2$-sided ideals and as
such, indecomposable. They  are uniquely determined by $KG$.
Furthermore, the $f_i$ are primitive idempotents in the center of
$KG$. For more details the reader is referred to (\cite[Chap. VII,
Section 12]{HB}).

If $C \leq KG$ is a group code, then $C = Cf_0 \oplus \ldots \oplus
Cf_s$. One easily sees that $C$ is checkable, i.e. $C =\ann_r(KGa)$,
if and only if
$$Cf_i = \ann_r^{B_i}(KGf_ia) = \ann_r^{B_i}(B_i f_ia) = \{ b \in B_i \mid (f_ia) b=0\}.$$
This shows that $C$ is checkable if and only if the block components $Cf_i$ are checkable in $B_i$, for all $i$. \\

For an algebra $A$ a right (left) $A$-module $M$ is called {\it
	uniserial} if it has only one Jordan-H\"older series, or in other
words $M$ has only one composition series. Furthermore, the {\it
	projective cover} $P(M)$ of an irreducible $A$-module $M$ is an
indecomposable projective $A$-module which has a factor module
isomorphic to $M$ (see (\cite[Chap.VII]{HB})).

\begin{thm} \label{T2} Let $\cha K =p$ and let $B$ be a $p$-block of $KG$. Then the following are equivalent.
	\begin{itemize}
		\item[\rm a)] All right ideals in $B$ are checkable.
		\item[\rm b)] All left ideals in $B$ are principal.
		\item[\rm c)] $B$ contains only one irreducible left module $M$ whose projective cover $P(M)$ is uni\-serial.
		\item[\rm d)] The defect group of $B$ is cyclic and $B$ contains only one irreducible left module.
	\end{itemize}
\end{thm}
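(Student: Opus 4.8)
The plan is to prove (a)$\Leftrightarrow$(b) almost formally and to put the real work into (b)$\Leftrightarrow$(c). First I would record that $B$, being a block $fKG$ of the symmetric algebra $KG$, is itself a symmetric (in particular Frobenius) algebra: the symmetrizing form of $KG$ restricts to one on $B$, because every one-sided ideal of $B$ is already a one-sided ideal of $KG$ (here $f$ is central). Hence Proposition~\ref{P4} and Corollary~\ref{cor1} apply to $B$. By Corollary~\ref{cor1} a right ideal $I\le B$ is checkable iff $\ann_l(I)$ is a principal left ideal, and by Proposition~\ref{P4} the assignments $I\mapsto\ann_l(I)$ and $L\mapsto\ann_r(L)$ are mutually inverse bijections between the right ideals and the left ideals of $B$; so every right ideal of $B$ is checkable precisely when every left ideal of $B$ is principal, which is (a)$\Leftrightarrow$(b).

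For (b)$\Leftrightarrow$(c) the engine is an elementary criterion I would prove first. Write ${}_BB=\bigoplus_i P_i^{\,a_i}$ with the $P_i$ the pairwise non-isomorphic indecomposable projective left $B$-modules and $S_i=P_i/JP_i$, so that $B/JB=\bigoplus_i S_i^{\,a_i}$. Then a left ideal $L$ of $B$ is principal if and only if for every $i$ the multiplicity of $S_i$ in $L/JL$ is at most $a_i$. Indeed $L$ is principal iff it is a homomorphic image of ${}_BB$; the forward direction holds because then $L/JL$ is a quotient of $B/JB=\bigoplus_i S_i^{\,a_i}$, and the converse because if the multiplicities $c_i$ of $S_i$ in $L/JL$ satisfy $c_i\le a_i$, then the projective cover $\bigoplus_i P_i^{\,c_i}$ of $L$ is a direct summand of ${}_BB$ and one composes ${}_BB\twoheadrightarrow\bigoplus_i P_i^{\,c_i}\twoheadrightarrow L$.

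Granting this, (b)$\Rightarrow$(c) is quick. If $B$ had two non-isomorphic irreducible left modules then, as $B$ is an indecomposable algebra, its $\mathrm{Ext}$-quiver is connected and so contains an arrow $S_i\to S_j$ with $S_i\not\cong S_j$; equivalently $S_j$ occurs in $JP_i/J^2P_i$. The left ideal $L=JP_i\oplus P_j^{\,a_j}$, sitting inside the summand $P_i\oplus P_j^{\,a_j}$ of ${}_BB$ (legitimate as $i\ne j$), then has $L/JL=(JP_i/J^2P_i)\oplus S_j^{\,a_j}$, in which $S_j$ has multiplicity $\ge a_j+1>a_j$, so $L$ is not principal by the criterion — contradicting (b). Hence $B$ has a unique irreducible left module $M$; put $P=P(M)$, so ${}_BB=P^{\,n}$. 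If $P$ were not uniserial then, as $M$ is the only irreducible module, some radical layer $J^kP/J^{k+1}P$ with $k\ge1$ would be $\cong M^{\,t}$ with $t\ge2$, and $L=J^kP\oplus P^{\,n-1}\le{}_BB$ would have $L/JL\cong M^{\,t+n-1}$ with $t+n-1>n$, again not principal. So $P(M)$ is uniserial and (c) holds.

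For (c)$\Rightarrow$(b): with a unique irreducible module $M$ and $B$ indecomposable, $B\cong M_n(\Lambda)$ where $\Lambda:=eBe\cong\mathrm{End}_B(P)^{\mathrm{op}}$ is local, $e$ a primitive idempotent with $P=Be$; moreover $e(-)$ is a Morita equivalence of module categories sending ${}_BB$ to ${}_\Lambda\Lambda^n$, hence carrying left ideals of $B$ to submodules of ${}_\Lambda\Lambda^n$ and the principal ones to those submodules that are also quotients of ${}_\Lambda\Lambda^n$. Since $P=P(M)$ and $JP$ is uniserial with top $M$, there is an epimorphism $P\twoheadrightarrow JP$; composing it with $JP\hookrightarrow P$ yields $\pi\in\mathrm{End}_B(P)$ of image $JP$, through which every non-automorphism of $P$ factors (its image lies in the unique maximal submodule $JP$). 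Thus $J(\mathrm{End}_B(P))$ is generated as a right ideal by $\pi$, so $J(\Lambda)=\Lambda\pi$ is principal, whence $J(\Lambda)^k=\Lambda\pi^k$ and every left ideal of $\Lambda$ equals some $\Lambda\pi^k$: $\Lambda$ is a chain ring. Then one invokes the module theory of chain rings — every submodule of the free module ${}_\Lambda\Lambda^n$ is a direct sum of at most $n$ cyclic modules and is therefore itself a quotient of ${}_\Lambda\Lambda^n$ — and concludes, via the above correspondence, that every left ideal of $B$ is principal. (Alternatively one may quote the classical Köthe–Asano–Nakayama description of the principal ideal algebras as the finite direct products of matrix algebras over chain rings.) I expect this last step — the decomposition of submodules of a free module over a possibly non-commutative chain ring — to be the main technical obstacle; everything preceding it is bookkeeping with projective covers, tops and radicals.
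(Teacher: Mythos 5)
Your proof is correct, and while the reduction (a)$\Leftrightarrow$(b) via Corollary~\ref{cor1} and the double annihilator property is exactly the paper's, your treatment of (b)$\Leftrightarrow$(c) takes a genuinely different route. The paper follows Eisenbud--Griffith: for (b)$\Rightarrow$(c) it looks at the socles $\soc_l(P/J^kP)$, whose full preimages are principal left ideals and hence factor modules of $B$, and counts composition factors to force every projective indecomposable to be uniserial with all composition factors isomorphic; for (c)$\Rightarrow$(b) it dualizes to get the right-hand condition, invokes Nakayama's theorem that $B$ is then generalized uniserial, and reduces modulo $JI$ via Nakayama's lemma. You instead isolate a clean numerical criterion --- a left ideal $L$ is principal iff the multiplicity of each simple $S_i$ in $L/JL$ is at most its multiplicity $a_i$ in $B/JB$ --- which makes (b)$\Rightarrow$(c) fully explicit: you exhibit concrete non-principal ideals ($JP_i\oplus P_j^{a_j}$, resp.\ $J^kP\oplus P^{n-1}$) whenever the quiver has an arrow between distinct vertices or a radical layer of multiplicity $\ge 2$. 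Both arguments need the connectedness of the block at the same point. For (c)$\Rightarrow$(b) you pass through the Morita equivalence $B\cong M_n(\Lambda)$ with $\Lambda=eBe$ a local chain ring and then quote the Köthe--Asano--Nakayama structure of submodules of ${}_\Lambda\Lambda^n$; this is a legitimate classical input of essentially the same weight as the paper's appeal to Nakayama's Theorem~17, though you should note (i) that $m\le n$ in the cyclic decomposition of a submodule of $\Lambda^n$ comes from comparing socles, and (ii) that $\Lambda$ is a chain ring on \emph{both} sides (by the dual argument on right modules, as the paper points out), which the structure theorem requires. Your version buys concreteness and an explicit check-failure certificate in the (b)$\Rightarrow$(c) direction at the cost of slightly heavier machinery (Morita theory) in the converse; the paper's version stays entirely inside $B$ but is more condensed and leans on the serial-ring literature at the corresponding spot.
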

\begin{proof} First note that $B$ is a symmetric algebra (\cite[Chap.VII, Section 11]{HB}). \\
	$a) \Longleftrightarrow b)$  Let $I$ be a right ideal of $B$. Then
	$I= \ann_r^B(Bv)$, for some $v \in B$, if and only if $\ann_l^B(I) =
	Bv$. Since $\ann_r^B$ yields a bijection from the set of left ideals
	in $B$ onto the set of right ideals in $B$
	we are done. \\
	$b) \Longrightarrow c)$ Clearly, all left ideals are principal if
	and only if all right ideals are principal, just by applying the
	antiautomorphism \ $\widehat{}$ \ . Thus $B$ is an artinian principal
	ideal ring and (\cite[Theorem 2.1]{EG}) implies that all left
	$B$-modules are homogeneous uniserial. In particular all projective
	indecomposable left $B$-modules are uniserial.
	
	Next we have to show that $B$ contains only one irreducible left module.
	Let $J=J(B)$ denote the Jacobson radical of $B$ and  $M$ be an irreducible left
	$B$-module with projective cover $P=P(M)$.
	Let $X$ be the largest submodule of $\soc_l(P/J^k P)$ whose irreducible components are all isomorphic to $M$.
	Since the full preimage
	of this completely irreducible module is a left ideal in $B$, it is principal, hence a factor module of $B$.  This implies that
	$X$ is a factor module of $B/JB$, for all $k$. Doing this argument for a decomposition of $B$ into a direct sum of projective
	indecomposable left modules and counting all composition factors in the regular left module $B$, we see  that all
	composition factors of $P(M)$ are isomorphic to $M$. \\
	Since the principal indecomposable modules in a block are connected, we get that $B$ contains only one irreducible left module.\\
	$c) \Longrightarrow b)$  Let $I$ be a left ideal of $B$. If there exists $a \in B$ such that $I=Ba +JI$, then by Nakayama's Lemma $I =Ba$
	and we are done. Thus it is sufficient to show that $I/JI$ is generated by one element as a $B$-left module.\\
	First note that
	the condition in c) obviously implies that there is also exactly one irreducible right module in $B$ whose projective cover is uniserial.
	The corresponding irreducible right module is $\hat{M}=\ho_K(M,K)$ with the right structure
	$$   (\varphi g)(m) = \varphi(gm)  \quad \text{for} \ \varphi \in \hat{M}, g \in G, m \in M.$$
	Thus a result of Nakayama (\cite[Theorem 17]{N}) says that all left
	(and right) $B$-modules are uniserial. Clearly, $I/JI$ is  a
	$B/J$-submodule of $\soc(B/JI)$. Since $B/JI$ is uniserial, we see
	that $I/JI$ is a submodule of $B/J$. But all ideals in $B/J$ are
	principal. Hence $I/JI$ is generated by one element. \\
	$c) \Longrightarrow d)$. Note that b) is exactly the statement in
	(1) of (\cite[Theorem 2.1]{EG}) which is equivalent to (3). Thus all
	indecomposable left $B$-modules are submodules of $B$ which means
	that $B$ is of finite representation type. Now d) follows by (\cite{Z}, Proposition 2.12.9).\\
	$d) \Longrightarrow c)$ Again by (\cite{Z}, Proposition 2.12.9), we
	know that $B$ is of finite representation type. Following the proof
	of (\cite{A}, Section 18, Proposition 3), we see that the projective
	cover $P(M)$ of the unique irreducible left module $M$ in $B$ is
	uniserial.
\end{proof}

As usual the {\it principal $p$-block} $B_0(G)$ of $G$ is the block which contains the trivial $KG$-module.
Furthermore, remember from finite group theory that a group $G$ is called {\it $p$-nilpotent} if $G$ has a normal $p'$-subgroup $N = O_{p'}(G)$
such that the factor group $G/N$ is a $p$-group, i.e.,
$G/N$ is isomorphic to a Sylow $p$-subgroup of $G$. \\

\begin{cor} \label{C3} Let $\cha K =p$ and let $B_0(G)$ be the principal $p$-block of $KG$. The following are equivalent.
	\begin{itemize}
		\item[\rm a)] $G$ is $p$-nilpotent with
		a cyclic Sylow $p$-subgroup.
		\item[\rm b)]  $KG$ is a code-checkable group algebra.
		\item[\rm c)] All right ideals in $B_0(G)$ are checkable.
	\end{itemize}
\end{cor}
\begin{proof} $a) \Longrightarrow b)$ By (\cite[Chap. VII, Theorem 14.9]{HB}), each block has exactly one irreducible
	left module. Since the principal block $B_0(G)$, which contains the trivial module $1_G$, is isomorphic to
	$KG/O_{p'} (G) \cong KT$, with $T$ a Sylow $p$-subgroup of $G$, we get that $B_0(G)=P(1_G) \cong KT$ is uniserial. 
	If $M$ is any irreducible $KG$-module, then $P(M)$ is a factor module of $P(1_G) \otimes M$, which is uniserial.
	Thus $P(M)$ is uniserial as well. This shows that all blocks of $KG$ satisfy condition c) of Theorem \ref{T2}, hence condition a),
	which means that $KG$ is code-checkable. \\
	$ b) \Longrightarrow c)$ This is obvious.\\
	$c) \Longrightarrow a).$  Note that according to Theorem \ref{T2} the condition in c)
	implies that $B_0(G)$ contains only one irreducible left module, namely the trivial module $1_G$, and $P(1_G)$ is uniserial.
	Thus again by (\cite[Chap. VII, Theorem 14.9]{HB}), the group $G$ must be
	$p$-nilpotent. In particular, if $T$ is a Sylow $p$-subgroup of $G$,
	then $B_0(G) \cong KG/O_{p'} (G) \cong KT \cong P(1_G)$ forces $T$ to be cyclic.
\end{proof}

Observe that the equivalence of a) and b) is already contained in an
early paper of  Passman (\cite[Theorem 4.1]{P}).

\begin{remark} {\rm In \cite{JLLX} the authors study group codes in code-checkable group algebras $KG$, for $G$ abelian, i.e.,
		$G=A \times T$, with $A$  an abelian $p'$-group and $T$  a cyclic $p$-group, if $p = \cha K$. In particular,
		a characterization and enumeration of Euclidean self-dual and  self-orthogonal group codes is given.
		
	}
\end{remark}

\section{Asymptotically good classes}\label{sec:asym}

In the literature there are many papers which prove that particular classes of codes are asymptotically good.
In \cite{BM} the authors investigated binary group codes over
dihedral groups of order $2m$, for $m$  odd. Their results in
Section 4  show that the class of group codes over these groups is
asymptotically good. Applying field extensions as in (\cite{FW},
Proposition 12) this result can be extended to any field of
characteristic $2$. These methods have been generalized to any
finite field in odd characteristic \cite{BW}.
Thus we have the following result.

\begin{thm}[\cite{BM}, \cite{BW}]  For any finite field $K$  the class of group
	codes in code-checkable group algebras is asymptotically good.
\end{thm}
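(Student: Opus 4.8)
The plan is to reduce Theorem \ref{Bazzi} to the work of Bazzi and Mitter \cite{BM}, using the structural results established earlier in the paper as the glue. First I would recall from \cite{BM} that over $\F_2$ the family of binary group codes associated to the dihedral groups $D_{2m}$ with $m$ odd is asymptotically good; concretely, there is an infinite sequence of such codes whose rates stay bounded away from $0$ and whose relative minimum distances stay bounded away from $0$. The immediate issue is that this is stated over $\F_2$, whereas Theorem \ref{Bazzi} asserts the conclusion for every finite field $K$ of characteristic $2$. To bridge this gap I would invoke the field-extension argument of \cite{FW}, Proposition 12: if $C \leq \F_2 G$ is a group code, then $C \otimes_{\F_2} K \leq KG$ is again a group code of the same length, the same dimension over $K$, and the same minimum distance (extending scalars does not change the Hamming weight of a nonzero word, and preserves linear independence). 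Applying this termwise to the Bazzi--Mitter sequence yields, for any finite field $K$ of characteristic $2$, an asymptotically good sequence of group codes in the algebras $K D_{2m}$, $m$ odd.

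The second ingredient is that these ambient group algebras are code-checkable. Here I would apply Corollary \ref{CCC}: $KG$ is code-checkable exactly when $G$ is $p$-nilpotent with cyclic Sylow $p$-subgroup, where $p = \cha K = 2$. For $G = D_{2m}$ with $m$ odd, the Sylow $2$-subgroup has order $2$, hence is cyclic, and $D_{2m}$ has a normal (cyclic) $2$-complement, namely the rotation subgroup $C_m$; thus $D_{2m}$ is $2$-nilpotent with cyclic Sylow $2$-subgroup. Therefore $K D_{2m}$ is code-checkable for every finite field $K$ of characteristic $2$ and every odd $m$. In particular every group code in $K D_{2m}$ is checkable, so the asymptotically good sequence produced in the previous paragraph consists entirely of checkable codes living in code-checkable group algebras.

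Combining the two observations finishes the proof: the sequence of codes in $K D_{2m}$ ($m$ odd, $m \to \infty$) obtained by extending scalars from the Bazzi--Mitter family is asymptotically good, and every one of its members is a (checkable) group code in a code-checkable group algebra $KG$. Hence the class of group codes in code-checkable group algebras over $K$ is asymptotically good, as claimed. I do not expect a genuine obstacle here, since both cited results do the real work; the only point requiring minor care is the verification that scalar extension in \cite{FW} preserves all three relevant parameters simultaneously (length, dimension, minimum distance) and sends group codes to group codes, which is routine — one checks that $C \otimes_{\F_2} K$ is the right ideal of $KG$ generated by $C$, and that a $K$-basis of minimum-weight considerations is unaffected by the flat base change $\F_2 \hookrightarrow K$.
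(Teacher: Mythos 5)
Your proposal is correct and follows essentially the same route as the paper, which likewise combines the Bazzi--Mitter result on dihedral group codes, the field-extension argument of (\cite{FW}, Proposition 12), and Corollary \ref{CCC} applied to $D_{2m}$ with $m$ odd. You merely spell out in more detail the verification that $D_{2m}$ is $2$-nilpotent with cyclic Sylow $2$-subgroup, which the paper leaves implicit.
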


\end{document}